\title{Near-field Liquid Crystal RIS Phase-Shift Design for Secure Wideband Illumination}
\author{
\IEEEauthorblockN{Mohamadreza Delbari, Qikai Zhou, Robin Neuder, Alejandro Jim\'{e}nez-S\'{a}ez, and Vahid Jamali}
\IEEEauthorblockA{
Technical University of Darmstadt, Darmstadt, Germany
\vspace{-3mm}
\thanks{Delbari, Zhou and Jamali’s work was supported in part by the Deutsche Forschungsgemeinschaft (DFG, German Research Foundation) within the Collaborative Research Center MAKI (SFB 1053, Project-ID 210487104) and in part by the LOEWE initiative (Hesse, Germany) within the emergenCITY Centre under Grant LOEWE/1/12/519/03/05.001(0016)/72. Neuder and Jim\'{e}nez-S\'{a}ez's work was supported by the Deutsche Forschungsgemeinschaft (DFG, German Research Foundation) – Project-ID 287022738 – TRR 196 MARIE within project C09. 
}
}
}
\newtheorem{lem}{Lemma}
\newcommand{\defeq}{\triangleq}
\def\bOmega{\boldsymbol{\Omega}}
\newcommand{\e}{\mathsf{e}}
\newcommand{\jj}{\mathsf{j}}
\newcommand{\Herm}{\mathsf{H}}
\newcommand{\Trans}{\mathsf{T}}
\newcommand{\x}{\mathsf{x}}
\newcommand{\y}{\mathsf{y}}
\newcommand{\z}{\mathsf{z}}
\newcommand{\bA}{\mathbf{A}}
\newcommand{\bx}{\mathbf{x}}
\newcommand{\bs}{\mathbf{s}}
\newcommand{\bS}{\mathbf{S}}
\newcommand{\bH}{\mathbf{H}}
\newcommand{\ba}{\mathbf{a}}
\newcommand{\bI}{\mathbf{I}}
\newcommand{\bh}{\mathbf{h}}
\newcommand{\bq}{\mathbf{q}}
\newcommand{\bp}{\mathbf{p}}
\newcommand{\bSigma}{\boldsymbol{\Sigma}}
\newcommand{\bGamma}{\boldsymbol{\Gamma}}
\newcommand{\bmu}{\boldsymbol{\mu}}
\newcommand{\blambda}{\boldsymbol{\lambda}}
\newcommand{\Ex}{\mathbb{E}}
\newcommand{\diag}{\mathrm{diag}}
\newcommand{\tr}{\mathrm{tr}}
\newcommand{\rank}{\mathrm{rank}}
\newcommand{\SNR}{\mathrm{SNR}}
\newcommand{\RS}{\mathrm{SR}}
\newcommand{\RIS}{\mathrm{RIS}}
\def\bomega{\boldsymbol{\omega}}
\def\bzero{\boldsymbol{0}}
\def\bone{\boldsymbol{1}}
\def\Cset{\mathbb{C}}
\def\Rset{\mathbb{R}}
\def\tmax{\mathrm{max}}
\def\BS{\mathrm{BS}}
\def\RIS{\mathrm{RIS}}
\def\SNR{\mathrm{SNR}}
\def\eff{\mathrm{eff}}
\def\sCN{\mathcal{CN}}
\def\Pset{\mathcal{P}}
\def\bigO{\mathcal{O}}
\newacronym{RIS}{RIS}{reconfigurable intelligent surface}
\newacronym{QoS}{QoS}{quality of service}
\newacronym{LC}{LC}{liquid crystal}
\newacronym{SNR}{SNR}{signal-to-noise ratio}
\newacronym{TDMA}{TDMA}{time-division multiple-access}
\newacronym{BS}{BS}{base station}
\newacronym{MU}{MU}{mobile user}
\newacronym{ME}{ME}{mobile eavesdropper}
\newacronym{NF}{NF}{near-field}
\newacronym{Tx}{Tx}{transmitter}
\newacronym{Rx}{Rx}{receiver}
\newacronym{AWGN}{AWGN}{additive white Gaussian noise}
\newacronym{wrt}{w.r.t.}{with respect to}
\newacronym{RDE}{RDE}{Reaction-Diffusion Equation}
\newacronym{PDE}{PDE}{partial differential equation}
\newacronym{UPA}{UPA}{uniform planar array}
\newacronym{ULA}{ULA}{uniform linear array}
\newacronym{AO}{AO}{alternative optimization}
\newacronym{SOCP}{SOCP}{second-order cone programming}
\newacronym{AoD}{AoD}{angle of departure}
\newacronym{LOS}{LOS}{line of sight}
\newacronym{nLOS}{nLOS}{non-LOS}
\newacronym{MIMO}{MIMO}{multiple-input multiple-output}
\newacronym{RS}{SR}{secure rate}
\newacronym{SDP}{SDP}{semi-definite programming}
\newacronym{6G}{6G}{sixth generation}
\newacronym{CSI}{CSI}{channel state information}
\newacronym{PIN}{PIN}{positive-intrinsic-negative}
\newacronym{RF}{RF}{radio frequency}
\newacronym{MEMS}{MEMS}{micro-electro-mechanical system}
\newacronym{mmWave}{mmWave}{millimeter wave}
\newacronym{OFDM}{OFDM}{orthogonal frequency division multiplexing}
\begin{document}

\maketitle

\begin{abstract}
\Gls{LC} technology provides a low-power and scalable approach to implement a \gls{RIS}. However, the \gls{LC}-based \gls{RIS}'s phase-shift response is inherently frequency-dependent, which can lead to performance degradation if not properly addressed. This issue becomes especially critical in secure communication systems, where such variations may result in considerable information leakage. To avoid the need for full \gls{CSI} acquisition and frequent \gls{RIS} reconfiguration, we design \gls{RIS} for a wideband \gls{OFDM} system to illuminate a desired area containing legitimate users while avoiding leakage to regions where potential eavesdroppers may be located. Our simulation results demonstrate that the proposed algorithm improves the secrecy rate compared to methods that neglect frequency-dependent effects. In the considered setup, the proposed method achieves a secrecy rate of about 2 bits/symbol over an 8 GHz bandwidth when the center frequency is 60 GHz.
\end{abstract}
\glsresetall
\section{Introduction}
\Glspl{RIS} are considered a promising technology for the future generations of wireless communications, as they enable programmable and controllable radio environments \cite{Wu2019,Basar2019,di2019smart}. \Gls{LC} technology has recently emerged as an energy-efficient and scalable approach for realizing a large \gls{RIS}, especially in \gls{mmWave} communication systems \cite{neuder2024architecture,aboagye2022design,delbari2025fast}.  Recent studies have explored various aspects of \gls{LC}-based \glspl{RIS}, ranging from practical implementations to theoretical analysis. For instance, the compact experimental realization of an \gls{LC}-\gls{RIS} was demonstrated in \cite{neuder2024architecture}, while its potential in visible light communication was investigated in \cite{aboagye2022design}. A broader evaluation of the fundamental properties of \gls{LC}-\gls{RIS}, such as energy efficiency and cost-effectiveness, is presented in \cite{jimenez2023reconfigurable}, which also benchmarks these systems against alternative implementation technologies. The temporal dynamics of \glspl{LC} were modeled in \cite{delbari2024fast,delbari2025fast} in a time-division multiple-access scheme. The authors in \cite{delbari2024temperature,gholian2025temperature} analyzed the impact of the temperature on the performance of \gls{LC}-\gls{RIS}. Collectively, these contributions emphasize the increasing interest in improving the practicality and performance of \gls{LC} components within \gls{RIS}. 

\gls{LC}-based \gls{RIS}s are inherently frequency-dependent \cite{neuder2024architecture}, which can lead to considerable performance degradation in wideband systems. This issue is especially critical in the context of physical layer security, where unintended information leakage may occur, particularly under \gls{OFDM}-based transmission. Several recent studies have addressed the challenges of wideband \gls{RIS}-aided communication in this setting \cite{Yu2024,Qian2025,Cui2024,He2021,Su2023,Mo2024,li2021intelligent}. The problem of beam splitting in wideband \gls{RIS}-aided communication systems has motivated extensive research, resulting in advances in \gls{NF} beamforming \cite{Yu2024,Cui2024}, signal processing techniques \cite{Qian2025}, and system-level architectures \cite{He2021,Su2023,Mo2024}. However, these works often assume idealized \gls{RIS} models, where the phase shift is either frequency-independent or can be perfectly controlled across subcarriers. Even studies that consider more practical models \cite{li2021intelligent} typically rely on precise knowledge of the users' location. 

In contrast, this paper addresses the design of phase shifters under the practical constraints imposed by \gls{LC}-\gls{RIS} technology when the exact locations are not necessarily known. In particular, we design the \gls{RIS} for secure illumination by maximizing the received power in the desired area containing the \glspl{MU}, while minimizing it in the region where potential eavesdroppers may be located. We develop a novel optimization framework that accounts for the distinct frequency response of each \gls{RIS} element to construct a secure communication region. This approach bridges the gap between theoretical wideband beamforming and its hardware-limited realization in secure spatial coverage. To the best of the authors' knowledge, the frequency-dependency of \gls{LC}-\gls{RIS}s in a wideband scenario has not been investigated in the literature, yet. Our key contributions are as follows:

\begin{itemize}
    \item First, we develop a model that accounts for the impact of frequency on the phase-shift response of \gls{LC}-\gls{RIS}. We characterize how altering the frequency influences the response of the phase shifts.
    \item Next, we introduce a novel secrecy setup involving an \gls{LC}-\gls{RIS}-assisted system with multiple legitimate users distributed within a given area and a mobile eavesdropper. In this setup, the exact location and instantaneous \gls{CSI} of the eavesdropper (and the users) are assumed to be unknown. Instead, only their approximate locations, within a certain vicinity, are available. This assumption is both practical and challenging, as it captures real-world conditions where the \gls{RIS} must ensure secure communication without precise location information.
    \item Then, we formulate an optimization problem to design the phase shifts of the \gls{LC}-\gls{RIS} by taking into account the frequency-dependent behavior of each \gls{RIS} element. Since the problem is non-convex, we propose several reformulations that enable the development of an efficient sub-optimal solution based on rank relaxation and semidefinite programming.
    \item Finally, we evaluate the performance of the proposed algorithm against three benchmark schemes, which neglect the frequency dependency of individual \gls{RIS} elements. Simulation results show that our method achieves a higher ensured secrecy rate across different subcarriers in the \gls{OFDM} setup by explicitly accounting for the frequency-dependent characteristics in the \gls{LC}-\gls{RIS} phase shift design.
\end{itemize}

\begin{figure}
    \centering
    \includegraphics[width=0.5\textwidth]{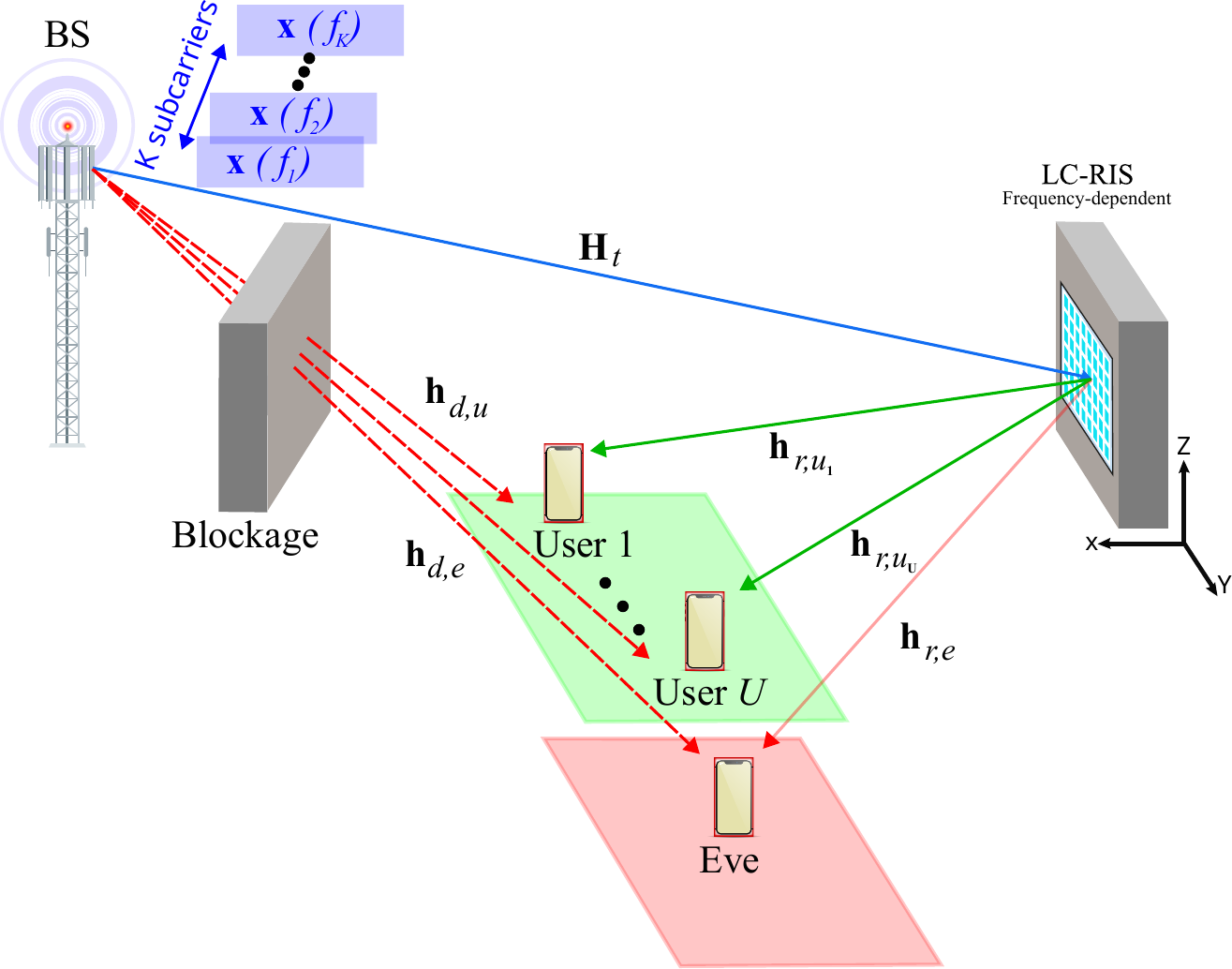}
    \caption{A schematic of a wireless channel model where a \gls{BS} serves a legitimate user via an \gls{RIS} trying to decrease the received signal by an eavesdropper.}
    \label{fig:system model}
\end{figure}

\textit{Notation:} Bold capital and small letters are used to denote matrices and vectors, respectively.  $(\cdot)^\Trans$, $(\cdot)^\Herm$, $\rank(\cdot)$, $\tr(\cdot)$, $\odot$, and $(\cdot)^{\circ}$ denote the transpose, Hermitian, rank, trace of a matrix, Hadamard product, and Hadamard power, respectively. Moreover, $\diag(\bA)$ is a vector that contains the main diagonal entries of matrix $\bA$, $\bone_n$ and $\bzero_n$ denote column vectors of size $n$ whose elements are all ones and zeros, respectively. $\|\bA\|_*=\sum_i \sigma_i$, $\|\bA\|_2=\max_i \sigma_i$, $\|\bA\|_F$, and $\blambda_{\max}(\bA)$ denote the respectively nuclear, spectral, and Frobenius norms of a Hermitian matrix $\bA$, and eigenvector associated with the maximum eigenvalue of matrix $\bA$, where $\sigma_i,\,\,\forall i$, are the singular values of $\bA$. Furthermore, $[\bA]_{m,n}$ and $[\ba]_{n}$ denote the element in the $m$th row and $n$th column of matrix $\bA$ and the $n$th entry of vector $\ba$, respectively. $x^+$ denotes as $\max\{x,0\}$. Moreover, $\Rset$ and $\Cset$ represent the sets of real and complex numbers, respectively, $\jj$ is the imaginary unit, and $\Ex\{\cdot\}$  represents expectation. $\mathrm{rand}(N)$ denotes a $N\times1$ vector where each element is generated independently and uniformly from 0 to 1. $\mathcal{CN}(\bmu,\bSigma)$ denotes a complex Gaussian random vector with mean vector $\bmu$ and covariance matrix $\bSigma$. Finally, $\bigO(\cdot)$ represents the big-O notation and $|\Pset|$ is the cardinality of set $\Pset$.

\section{System, Channel, and LC Models}
\label{sec: System and Channel Models}
In this section, we begin by presenting the system model, which includes $U$ legitimate users and a mobile eavesdropper. Next, we describe the phase-shift model employed for the \gls{LC}-\gls{RIS}. Finally, we introduce the definition of the secrecy rate used throughout this paper.

\subsection{System and Channel Models}
\label{sec: system model}
In this paper, we study a wideband downlink communication scenario involving an \gls{BS} equipped with $N_t$ \gls{Tx} antennas, an \gls{RIS} comprising $N$ \gls{LC}-based unit cells, $U$ single-antenna legitimate users, and a single-antenna eavesdropper. The frequency representations of the receive signal at legitimate \glspl{MU} and \gls{ME} are
\begin{align}
\label{Eq:system model user}
	y_g(f_k) = &\big(\bh_{d,g}^\Herm(f_k) + \bh_{r,g}^\Herm(f_k) \bGamma(f_k) \bH_t(f_k) \big) \bx(f_k)\! +\!n_g,\\
    &\quad\,\, g=\{u_1,\cdots,u_U,e\}, \quad f_k\in\{f_1, \cdots, f_K\}.\nonumber
\end{align}
Let $f_k = f_1, \cdots, f_K$ denote the subcarriers in the \gls{OFDM} setup, where $K$ is the total number of subcarriers. $\bx(f_k)\in\Cset^{N_t}$ is the transmit signal vector for the legitimate users on the $f_k$th subcarrier, $y_u(f_k)\in\Cset$ and $y_e(f_k)\in\Cset$ are the received signal vector at the $u$th legitimate user and eavesdropper, respectively, on subcarrier $f_k$, and  $n_u\in\Cset$ ($n_e\in\Cset$) represents the \gls{AWGN} at the legitimate user (eavesdropper), i.e., $n_u,n_e\sim\sCN(0,\sigma_n^2)$, where $\sigma_n^2$ is the noise power. We adopt hybrid beamforming, where for each subcarrier, the transmit signal vector $\bx(f_k)$ is given by $\bx(f_k) = \bq s(f_k)$ where $\bq\in \Cset^{N_t}$ is beamforming vector on the \gls{BS}, and $s(f_k)\in \Cset^{N_s}$ is the transmitted data symbol at subcarrier $k$ with $\Ex\{|s(f_k)|^2\}=1,\,\forall k$. The beamformer must satisfy the transmit power constraint $\|\bq\|^2 \leq P_t$, where $P_t$ denotes the maximum allowable transmit power. The \gls{RIS} reflection is characterized by a diagonal matrix $\bGamma(f) \in \Cset^{N \times N}$, whose $n$th diagonal entry is given by $[\bGamma]_n(f) = [\bOmega]_n \e^{\jj[\bomega]_n(f)}$, where $[\bomega]_n(f)$ and $[\bOmega]_n$ denote the phase shift and reflection amplitude of the $n$th unit cell, respectively. Based on the design considerations of \glspl{LC}-\glspl{RIS} in the relevant frequency band \cite{neuder2024architecture}, we assume negligible amplitude variation across unit cells, i.e., $[\bOmega]_n \approx 1,\ \forall n$ in all subcarriers.

Moreover,  $\bh_{d,g}(f)\in\Cset^{N_t}, \bH_t(f)\in\Cset^{N\times N_t}$, and $\bh_{r,g}(f)\in\Cset^{N}$ denote the \gls{BS}-\{MU, ME\}, BS-RIS, and RIS-\{MU, ME\} frequency selective channel matrices, respectively, where $g\in\{u_1,\cdots,u_U,e\}$, and $f$ is frequency. We assume that the direct channels between the \gls{BS} and all the legitimate mobile users and the mobile eavesdropper are severely blocked within the coverage area, i.e., $\bh_{d,g}\approx\boldsymbol{0}_{N_t}, \forall g\in\{u_1,\cdots,u_U,e\}$. This motivates the use of an \gls{RIS} to establish a reliable communication link. Due to the use of high-frequency bands, the communication between the \gls{RIS}, \gls{BS}, and the legitimate user is assumed to be dominated by \gls{LOS} components. Accordingly, we adopt a Rician fading model with a high $K$-factor to capture the strong \gls{LOS} path relative to the \gls{nLOS} components. This model is applicable to all channel components $\bH_t$, $\bh_{r,g}$, and $\bh_{d,g}$, for $g\in\{u_1,\cdots,u_U,e\}$, with appropriate adaptations.

\subsection{Frequency Response of an LC Cell}
\label{Frequency response of the LC cell}
In this subsection, we explain how \gls{LC} molecules induce a phase shift on an incoming electromagnetic wave, and how varying the frequency impacts the induced phase shift. \gls{LC}-\glspl{RIS} manipulate signal reflection by exploiting the unique electromagnetic properties of \gls{LC} molecules, which can be reoriented by an externally applied voltage \cite{jimenez2023reconfigurable}. This reorientation changes the permittivity of the \gls{LC} medium, thereby modifying the phase shift introduced by each \gls{RIS} element. \gls{LC} molecules are elongated and rod-shaped, and their electromagnetic response depends on the orientation of the electric field ($\vec{E}_{\rm RF}$) relative to their major and minor axes. When the electric field aligns with the major axis, the permittivity increases, which leads to a larger phase shift. This behavior is illustrated in Fig.~\ref{fig:V_phase}. By tuning the applied voltage, the orientation of the molecules can be controlled. This enables the \gls{RIS} to dynamically modify signal reflections and support a programmable wireless environment.
\begin{figure}[tbp]
	\centering
    \includegraphics[width=0.5\textwidth]{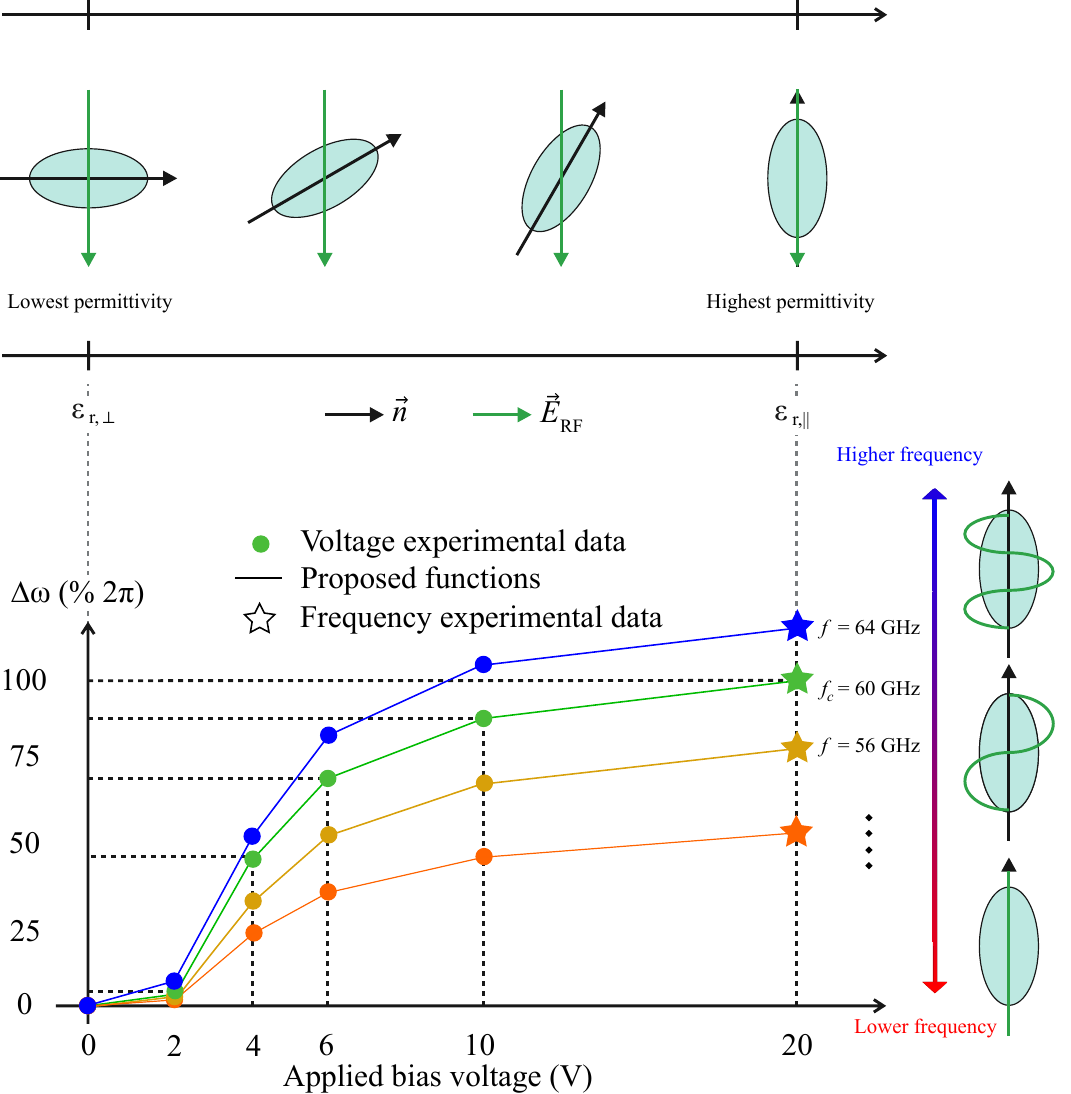}
    \caption{The relationship between phase shift and applied voltage for an \gls{LC} for different frequencies is depicted. The experimental data is given by \cite[Fig.~3.e]{neuder2024architecture}, and a piece-wise linear function is modeled similarly to \cite{delbari2024fast}.}
    \label{fig:V_phase}
\end{figure}
The maximum range of phase shift that an \gls{LC} unit can achieve depends on several factors, such as the length of the \gls{LC} phase shifter, the difference in permittivity between different molecular orientations, and the operating frequency, i.e.,
\begin{equation}
    \Delta\omega_{\max}(f)=2\pi l\Delta n_{\max}\frac{1}{c}\left(f_c+\beta\left(f-f_c\right)\right),
    \label{eq:omega epsilon}
\end{equation}
where $l$, $f$, $f_c$, and $c$ denote the phase shifter length, the operating frequency, the center frequency, and the speed of light in vacuum, respectively. The factor $\beta$ controls how the maximum tuning range of the phase shifter varies with frequency. The term $\Delta n_{\max}$ represents the maximum possible birefringence of the \gls{LC} and is defined as
\begin{equation}
    \Delta n_{\max}=\sqrt{\varepsilon_{r,\parallel}}-\sqrt{\varepsilon_{r,\perp}},
    \label{eq:delta n}
\end{equation}
where $\sqrt{\varepsilon_{r,\parallel}}$ and $\sqrt{\varepsilon_{r,\perp}}$ are the maximum and the minimum relative permittivity, respectively. These values correspond to when an electrical field is aligned or perpendicular to the vector $\vec n$ in Fig. \ref{fig:V_phase}. When the input voltage varies, the value of $0\leq\Delta n\leq\Delta n_{\max}$ changes accordingly. This allows the phase shifters to be controlled at a given frequency. Suppose that the maximum differential phase shift at the center frequency ($f_c$) is $2\pi$. Assuming the minimum (reference) phase shift to be zero, i.e., $\omega_{\min}=0$, we can write the following equation by using \eqref{eq:omega epsilon}:
\begin{equation}
\omega(f)\!=\!2\pi\frac{2\pi l\Delta n\frac{1}{c}\left(f_c+\beta\left(f-f_c\right)\right)}{2\pi l\Delta n_{\max}\frac{f_c}{c}}\!=\!h(v)\!\!\left(\!1\!+\!\beta(\frac{f}{f_c}-1)\!\!\right),
\label{eq:omega reference f}
\end{equation}
where $h(v)\defeq2\pi\frac{\Delta n}{\Delta n_{\max}}$ and $0 \leq h(v) \leq 2\pi,\quad\forall v,$ denotes the externally applied voltage-to-phase relationship that controls the phase shift at a given frequency. While the function $h(v)$ lacks a straightforward analytical expression, it is characterized experimentally and shown in Fig.~\ref{fig:V_phase} (green curve). The relationship between $\omega$ and $f$ is also supported by experimental findings in \cite[Fig.~3e]{neuder2024architecture}. Equation \eqref{eq:omega reference f} indicates that as the frequency increases (or decreases), the maximum achievable phase shift also increases (or decreases) proportionally, as shown in Fig.~\ref{fig:V_phase} and supported by experimental result \cite{neuder2024architecture}. Rather than working directly with voltages, we define a reference phase shift as $\omega_c\defeq\omega(f_c)$ and in Section~\ref{sec: OFDM LC-RIS Phase-shift Design}, we optimize $[\bomega_c]_n,\,\forall n$. However, for the remaining subcarriers, the phase shifts are related by the constraint 
\begin{equation}
    \label{eq: omega to omega_c}
    [\bomega(f_k)]_n=[\bomega_c]_n\!\!\left(\!1\!+\!\beta(\frac{f_k}{f_c}-1)\!\!\right)=[\bomega_c]_n\beta_k,\forall n,\forall k,
\end{equation}
where we defined $\beta_k\defeq\left(\!1\!+\!\beta(\frac{f_k}{f_c}-1)\!\!\right)$ and \eqref{eq: omega to omega_c} ensures that the phase shifts scale consistently across all subcarriers.

\subsection{Secrecy Rate}
\label{sec: Secrecy rate}
The secrecy rate is a crucial metric used to quantify the performance of a physical-layer secure communication system. It is defined as the difference between the achievable rates of the legitimate user and the eavesdropper \cite{delbari2024temperature}: 
\begin{IEEEeqnarray}{ll}
\label{eq: SNR}
    \RS(f_k)\!=[\log(1+\SNR_u(f_k))-\log(1+\SNR_e(f_k))]^+\!,\\
    \SNR_u(f_k)=\frac{|(\bh_u^\eff(f_k))^\Herm\bq|^2}{\sigma^2_n},\\
    \SNR_e(f_k)=\frac{|(\bh_e^\eff(f_k))^\Herm\bq|^2}{\sigma^2_n},
\end{IEEEeqnarray}
where $(\bh_g^\eff)^\Herm=\bh_{d,g}^\Herm + \bh_{r,g}^\Herm \bGamma \bH_t,\,g=\{u_1,\cdots,u_U,e\}$ is the end-to-end channel from the \gls{BS} to the legitimate users and eavesdropper.

In this paper, our goal is to maximize the secrecy rate for all the frequency subcarriers by optimizing the \gls{RIS} phase shifts and \gls{BS} beamformer, thus the system can create a favorable communication environment for the legitimate users while suppressing the signal strength in the area of the eavesdropper\footnote{In this study, we do not consider the subcarrier assignment process in \gls{OFDM}. Instead, we focus on optimizing the phase shifts to ensure good link quality across the entire frequency band (i.e., solving the illumination problem). The orthogonal subcarrier assignment is assumed to be handled separately and is beyond the scope of this work.}. We design the \gls{RIS} phase-shift based on the approximate locations of both the coverage area and the eavesdropper, with the following benefits:
\begin{itemize}
    \item Unlike most literature that assumes the eavesdropper's \gls{CSI} is fully known, we assume only that the eavesdropper’s location lies within a specified area, $\bp_e\in\Pset_e$. The size of this area, $|\Pset_e|$, accounts for potential errors in estimating this location.
    \item Similarly, for the legitimate users, we assume only that their location is within a specific area, $\bp_u\in\Pset_u$. The area size, $|\Pset_e|$, can be adjusted not only to account for location estimation error but also to reduce reconfiguration overhead by decreasing the frequency of required \gls{RIS} adjustments. Hence, we optimize the phase shifts of the \gls{RIS} to serve the legitimate users in all potential locations for all subcarriers at the same time.
  
\end{itemize}
The \gls{RIS} phase shifts will be designed based on the {LOS} paths to optimize the communication, as it offers the most significant contribution to signal strength at \gls{mmWave} bands, where \gls{LC}-\gls{RIS}s are expected to be adopted.

\section{OFDM LC-RIS Phase-shift Design}
\label{sec: OFDM LC-RIS Phase-shift Design}
In this section, we start by formulating an optimization problem for phase-shift design aimed at maximizing the secure rate for all the subcarriers in \gls{OFDM} setup. Consider the setup depicted in Fig. \ref{fig:system model}, where the \gls{RIS} aims to maximize the secure rate defined in \eqref{eq: SNR} for the \glspl{MU} under the worst-case scenario. Specifically, the objective is to maximize the secure rate regardless of the exact locations of all the legitimate users and the eavesdropper, as long as they remain within their respective designated areas. This requirement must be satisfied for all subcarriers $f_k$, where $k = 1, \cdots, K$. We formulate the following problem formulation based on the only \gls{LOS} links:
\begin{subequations}
\label{eq:optimization 1}
\begin{align}
    \text {P1:}\quad&~\underset{\alpha,\bomega_c,\bq}{\max}~\alpha
    \\&~\text {s.t.} ~~\RS(f_k)\geq \alpha,\, \forall \bp_u\in\Pset_u,\,\forall\bp_e\in\Pset_e,\,\forall k,
    \\&\quad\hphantom {\text {s.t.} } 0\leq [\bomega_c]_n < 2\pi,
    \\&\quad\hphantom {\text {s.t.} }  [\bomega(f_k)]_n = [\bomega_c]_n\beta_k, \forall n,\,\forall k,
    \\&\quad\hphantom {\text {s.t.} } \|\bq\|_2^2 \leq P_t.
\end{align}
\end{subequations}
In problem P1, constraint (\ref{eq:optimization 1}b) represents the secure rate condition defined in \eqref{eq: SNR}, (\ref{eq:optimization 1}c) is the realizable phase-shift range for the center frequency, (\ref{eq:optimization 1}d) enforces the phase shifter of each \gls{RIS} element across subcarriers as defined in \eqref{eq: omega to omega_c}, and (\ref{eq:optimization 1}e) limits the transmit power. The parameter $\alpha$ characterizes the worst-case secure rate over all possible user and eavesdropper locations as well as all subcarriers, and the objective is to maximize this value. The optimization problem P1 is non-convex, primarily due to the non-convex nature of constraint (\ref{eq:optimization 1}b). Additionally, the coupling between the variables $\bq$ and $\bomega(f_k),\,\forall k,$ in this constraint further complicates the derivation of a global solution. To address this, we first explain the impact of the number of \gls{Tx} antennas on beam squinting, motivated by the observations in Figs.~\ref{fig: f_N_fixed} and \ref{fig: N_f_fixed} in Section~\ref{Sec: Impact of Number of Antenna in Wideband Scenario}. Next, we decompose the original problem into two sub-problems and iteratively maximize the objective using the \gls{AO} method, as discussed in Section~\ref{sec: Beamformer Design} and Section~\ref{sec: RIS Phase Shifter Design}. Finally, we analyze the overall algorithmic complexity in Section~\ref{sec: Algorithm and Complexity Analysis}.
\subsection{Impact of the Number of \gls{Tx} Antennas on Beam Squinting}
\label{Sec: Impact of Number of Antenna in Wideband Scenario}
Assume a \gls{ULA} is deployed at the \gls{Tx}, located at the origin, with $N$ antennas. A single-antenna \gls{Rx} is positioned at a fixed distance from the \gls{Tx}. Let $\bh_k^\Herm$ denote the channel at the $k$-th subcarrier, and $\bq$ be the transmit beamformer. Suppose we design the beamformer only for the center frequency $f_c$, yielding $\bq_c = \frac{\bh_c}{|\bh_c|} \sqrt{P_t}$. The received \gls{SNR} at subcarrier $k$ is given by $\SNR_k = \frac{|\bh_k^\Herm \bq_c|^2}{\sigma_n^2},$ and we define $\SNR_c$ as the \gls{SNR} at the center frequency, which is the maximum among all subcarriers. In Fig.~\ref{fig: f_N_fixed}, we show the \gls{SNR} of the $k$th subcarrier at frequency $f$ normalized by the \gls{SNR} of the center frequency subcarrier. In Fig.~\ref{fig: N_f_fixed}, we illustrate the minimum normalized \gls{SNR} across all subcarriers. These simulations show the following behavior:
\begin{itemize}
\item When $N$ is fixed and the bandwidth increases, the ratio $\frac{\SNR_k}{\SNR_c}$ decreases, and the rate of this decrease (i.e., the slope) depends on the value of $N$, see Fig.~\ref{fig: f_N_fixed}.
\item When the bandwidth is fixed and the number of \gls{Tx} antennas $N$ increases, the minimum ratio across subcarriers, $\underset{k}{\min} \frac{\SNR_k}{\SNR_c}$, also decreases. However, this degradation remains negligible up to a certain value of $N$, beyond which the distortion becomes significant, see Fig.~\ref{fig: N_f_fixed}.
\end{itemize}

These trends justify approximating the wideband beamformer at the \gls{BS} based on the center frequency response,  when $N$ is relatively small. However, this approximation is not valid for large \glspl{RIS} containing hundreds of elements\footnote{For a more rigorous proof of this claim, see \cite{Parvini}.}.

\begin{figure}[t]
\centering
\begin{subfigure}{0.5\textwidth}
\includegraphics[width=\textwidth,height=0.55\textwidth]{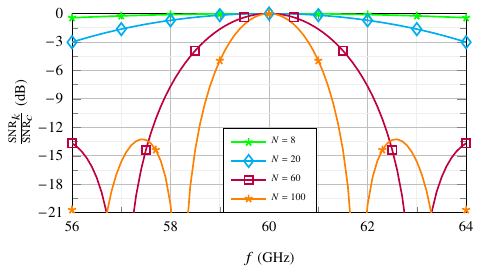}
\caption{The normalized \gls{Rx} power (dB) versus frequency for given $N$ values 8, 20, 60, and 100.}
\label{fig: f_N_fixed}
\end{subfigure}
\begin{subfigure}{0.5\textwidth}
\includegraphics[width=\textwidth,height=0.55\textwidth]{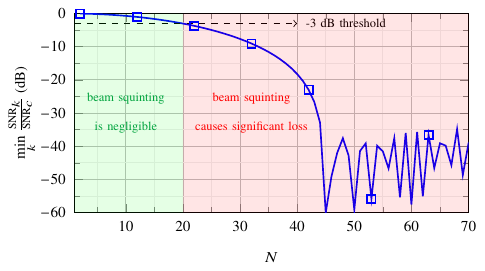}
    \caption{The normalized minimum \gls{Rx} power~(dB) accross all subcarriers versus number of \gls{Tx} antenna for given bandwidth 8 GHz, where $f_c=60$~GHz.}
    \label{fig: N_f_fixed}
\end{subfigure}
\caption{Impact of the bandwidth and number of \gls{Tx} antenna on the \gls{Rx} power.}
\label{fig: N and f different}
\end{figure}

\subsection{Beamformer Design}
\label{sec: Beamformer Design}
In this step, we fix $\bomega_c$ and treat $\bq$ as the sole optimization variable. In this paper, we assume that the number of the \gls{Tx} antenna is in the acceptable regime of Fig.~\ref{fig: N_f_fixed}, so we should only optimize $\bq$ for the center frequency. Furthermore, since the \gls{LOS} path is typically dominant at higher frequencies, we design the beamformer based solely on the \gls{LOS} component. This assumption is generally valid, particularly because both the \gls{BS} and \gls{RIS} are deployed at elevated positions above ground level. Under this assumption, the channel matrix $\bH_t(f_c)$ can be expressed as:
\begin{equation}
\label{eq: H_t}
\bH_t(f_c) = c_0 \ba_\text{RIS}(\bp_\BS,f_c), \ba^\Herm_\BS(\bp_\RIS,f_c),
\end{equation}
where $\ba_\RIS(\cdot)$ and $\ba^\Herm_\BS(\cdot)$ denote the steering vectors at the \gls{RIS} and \gls{BS} in center frequency, respectively, and both satisfy $|\ba_\RIS(\cdot)| = |\ba^\Herm_\BS(\cdot)| = 1$. Given the large \gls{LC}-\gls{RIS}, we adopt the \gls{NF} model for $\ba_\RIS(\cdot)$, as described in \cite{delbari2024nearfield}. Here, $\bp_\BS$ and $\bp_\RIS$ represent the center positions of the \gls{BS} and \gls{RIS}, respectively, while $c_0$ denotes the path loss coefficient of the \gls{LOS} link. With $\bomega$ fixed, the original problem P1 reduces to the following sub-problem:
\begin{subequations}
\label{eq:optimization 2}
\begin{align}
    \text {P2:}\quad&~\underset{\alpha,\bq}{\max}~\alpha
    \\&~\text {s.t.} ~~\RS(f_k)\geq \alpha,\, \forall \bp_u\in\Pset_u,\,\forall\bp_e\in\Pset_e,\,\forall k,
    \\&\quad\hphantom {\text {s.t.} } \|\bq\|_2^2\leq P_t.
\end{align}
\end{subequations}

The optimal beamformer for this sub-problem is characterized in the following lemma.
\begin{lem}
\label{lemma beamforming}
    Assuming a dominant \gls{LOS} link, fixed \gls{RIS} phase shifts, and blocked direct channels for both the legitimate user and the eavesdropper, the optimal beamformer for P2 is given by $\bq=\sqrt{P_t}\ba_\BS(\bp_\RIS,f_c)$.
\end{lem}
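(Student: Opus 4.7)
The plan is to exploit the rank-one structure of the BS--RIS channel to show that every effective end-to-end channel is collinear with the steering vector $\ba_\BS(\bp_\RIS,f_c)$, independently of $\bp_u$, $\bp_e$, and the subcarrier index. Under the LOS-dominant assumption together with the small-$N_t$ regime justified in Section~\ref{Sec: Impact of Number of Antenna in Wideband Scenario}, the BS--RIS channel at every subcarrier can be written as $\bH_t(f_k) \approx c_{0,k}\,\ba_\RIS(\bp_\BS,f_c)\,\ba_\BS^\Herm(\bp_\RIS,f_c)$ for some scalar $c_{0,k}$. Substituting this into $\bh_g^\eff(f_k) = \bH_t^\Herm(f_k)\bGamma^\Herm(f_k)\bh_{r,g}(f_k)$ (since $\bh_{d,g}\approx\bzero$) yields $\bh_g^\eff(f_k) = \gamma_{g,k}\,\ba_\BS(\bp_\RIS,f_c)$, where the scalar $\gamma_{g,k}\defeq c_{0,k}^*\,\ba_\RIS^\Herm(\bp_\BS,f_c)\bGamma^\Herm(f_k)\bh_{r,g}(f_k)$ collects all dependence on $g$, $f_k$, and the fixed RIS and positions, but \emph{not} on $\bq$.

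Next, I would plug this collinear form into the SNR definitions to obtain $|(\bh_g^\eff(f_k))^\Herm\bq|^2 = |\gamma_{g,k}|^2\,z$ with the common quantity $z\defeq|\ba_\BS^\Herm(\bp_\RIS,f_c)\bq|^2$. The per-subcarrier secrecy rate reduces to
\begin{equation*}
\RS(f_k) \;=\; \big[\log(1+a_{u,k}\,z) - \log(1+a_{e,k}\,z)\big]^+,
\end{equation*}
with $a_{g,k}\defeq|\gamma_{g,k}|^2/\sigma_n^2$. A short monotonicity check via differentiation in $z$ shows that this expression is non-decreasing in $z\geq 0$ whenever $a_{u,k}\geq a_{e,k}$, and is identically zero otherwise; in both cases, increasing $z$ never decreases $\RS(f_k)$. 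Consequently, the worst-case value $\min_{\bp_u\in\Pset_u,\bp_e\in\Pset_e,k}\RS(f_k)$ is itself non-decreasing in $z$.

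Because the beamformer $\bq$ enters every constraint of P2 only through the same scalar $z$, the sub-problem is equivalent to maximizing $z$ subject to $\|\bq\|_2^2\leq P_t$. By the Cauchy--Schwarz inequality together with $\|\ba_\BS(\bp_\RIS,f_c)\|_2=1$,
\begin{equation*}
|\ba_\BS^\Herm(\bp_\RIS,f_c)\bq|^2 \;\leq\; \|\ba_\BS(\bp_\RIS,f_c)\|_2^2\,\|\bq\|_2^2 \;\leq\; P_t,
\end{equation*}
with equality if and only if $\bq$ is aligned with $\ba_\BS(\bp_\RIS,f_c)$ and the power constraint is active, giving the claimed $\bq=\sqrt{P_t}\,\ba_\BS(\bp_\RIS,f_c)$.

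The main obstacle here is conceptual rather than algebraic: I must be transparent that the argument hinges on two simplifications adopted earlier, namely the LOS-only approximation of $\bH_t$ (which delivers the rank-one factorization collapsing the effective channel onto $\ba_\BS$) and the small-$N_t$ assumption from Section~\ref{Sec: Impact of Number of Antenna in Wideband Scenario} (which lets us reuse the center-frequency steering vector across all subcarriers and so kills BS-side beam squinting). Once these are in force, the collinearity of the effective channels across all $g$, $f_k$, $\bp_u$, $\bp_e$ makes the inner worst-case minimization decouple from $\bq$, and the optimal beamformer follows at once from matched-filtering onto the common direction $\ba_\BS(\bp_\RIS,f_c)$.
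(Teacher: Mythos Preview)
Your argument is correct and complete: the rank-one LOS factorization of $\bH_t$ makes every effective channel collinear with $\ba_\BS(\bp_\RIS,f_c)$, the secrecy rate becomes non-decreasing in the single scalar $z=|\ba_\BS^\Herm(\bp_\RIS,f_c)\,\bq|^2$, and Cauchy--Schwarz then yields the matched-filter beamformer. The paper itself omits the proof and defers to \cite[Lemma~1]{delbari2024temperature}; your reconstruction is precisely the argument one expects there.
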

\begin{proof}
    The proof is provided in \cite[Lemma~1]{delbari2024temperature} and omitted due to the space constraint.
\end{proof}

\subsection{RIS Phase Shifter Design}
\label{sec: RIS Phase Shifter Design}
 Assuming a fixed beamformer, we aim to maximize the secrecy rate by optimizing the \gls{RIS} phase shifts. Since the number of \gls{RIS} elements is not in the acceptable regime in Fig.~\ref{fig: N_f_fixed}, it is essential to carefully design the \gls{LC}-\gls{RIS} phase shifts across all subcarriers simultaneously. The resulting subproblem for configuring the phase shifts can be formulated as:
\begin{subequations}
\label{eq:optimization 3}
\begin{align}
    \text {P3:}\quad&~\underset{\alpha,\bomega_c}{\max}~\alpha
    \\&~\text {s.t.} ~~\RS(f_k)\geq \alpha,\, \forall \bp_u\in\Pset_u,\,\forall\bp_e\in\Pset_e,\,\forall k,
    \\&\quad\hphantom {\text {s.t.} } 0\leq [\bomega_c]_n < 2\pi,
    \\&\quad\hphantom {\text {s.t.} }  [\bomega(f_k)]_n = [\bomega_c]_n\beta_k, \forall n,\,\forall k.
\end{align}
\end{subequations}
Problem P3 is inherently non-convex due to the non-convexity of constraint (\ref{eq:optimization 3}b) with respect to $[\bomega(f_k)]_n$ for all $n$ and $k$. Without loss of generality, we omit the $[\cdot]^+$ operation from $\RS$ in \eqref{eq: SNR}, and proceed to maximize the secrecy rate using P3. This simplification does not affect the solution: if the resulting $\RS > 0$, the omission is fine; otherwise, if $\RS < 0$, the secrecy rate becomes zero regardless. To reformulate the constraint, we introduce a new variable $\gamma$ such that $\log(\gamma) = \alpha$. Using this substitution, constraint (\ref{eq:optimization 3}b) can be rewritten as:
\begin{equation}
    \frac{1+\SNR_u(f_k)}{1+\SNR_e(f_k)}\geq\gamma\Rightarrow (\SNR_u(f_k)-\gamma\SNR_e(f_k))\geq\gamma-1.
\end{equation}
 To tackle the non-convexity of the problem, we transform P3 into a \gls{SDP} problem. First let us decompose each $\SNR_u(f_k)$ and $\SNR_e(f_k)$ in terms of a vector including exponential of \gls{RIS} phase shifts $\bs_k\defeq[\e^{\jj[\bomega(f_k)]_1}, \cdots, \e^{\jj[\bomega(f_k)]_N}]^\Trans,\forall k$. With this assumption, we have
 \begin{subequations}
    \label{eq: SNR in term of s}
    \begin{align}
        \SNR_u(f_k)=&\bs_k^\Herm\bA^u_k\bs_k,\\
        \SNR_e(f_k)=&\bs_k^\Herm\bA^e_k\bs_k,
    \end{align}
\end{subequations}
where $\bA^g_k=\frac{\diag(\bh_{r,g}(f_k)^\Herm)\bH_t(f_k)\bq\bq^\Herm\bH_t(f_k)^\Herm\diag(\bh_{r,g}(f_k))}{\sigma_n^2},$ $g=\{u_1,\cdots,u_U,e\}$. Then, we define $\bS_k\defeq\bs_k\bs_k^\Herm,\,\forall k$, and $\bA_k(\gamma)\defeq\bA_k^u(\bp_u)-\gamma\bA_k^e(\bp_e)$ where $\bA_k(\gamma)$ is a function of $\bp_u,\,\bp_e,\,f_k$ and $\gamma$. By considering $\bS_c$ as the matrix associated to the center frequency $f_c$, (\ref{eq:optimization 3}d) changes to $\bS_k=\bS_c^{\circ\beta_k}, \forall k$. After applying these reformulations in P3 and because the logarithm is a monotone-increasing function, the problem P3 can change to P4 in the following:
 \begin{subequations}
\label{eq:optimization 4}
\begin{align}
    \text{P4:}&~\underset{\gamma,\bS_c}{\max}~\gamma
    \\&~\text {s.t.} ~~\text{C1: }\tr(\bA_k(\gamma)\bS_c^{\circ\beta_k})\geq\gamma-1, \forall (\bp_u,\bp_e)\in\Pset_u\times\Pset_e,\,\forall k,
    \\&\quad\hphantom {\text {s.t.} }\text{C2: } \bS_k\succeq 0,\,\forall k,
    \\&\quad\hphantom {\text {s.t.} }\text{C3: } \rank(\bS_k)=1,\,\forall k, \\&\quad\hphantom {\text {s.t.} }\text{C4: }\diag(\bS_k)=\bone_N,\,\forall k.
\end{align}
\end{subequations}
This problem is still non-convex due to the non-convexity in
C1 and C3 in $\bS_k,\forall k$, and being coupled $\gamma$ and $\bS_k,\forall k$ in C1. However, before the resolving these non-convexities, first, we introduce Lemma~\ref{lemma S for all k} to simplify P4 and then solve it.

\begin{lem}
\label{lemma S for all k}
   According to equation $\bS_k=\bS_c^{\circ\beta_k}, \forall k$, if constraints C2, C3, and C4 hold for the center frequency, then they hold for all frequencies, i.e., $\forall k$.
\end{lem}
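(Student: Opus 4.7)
The plan is to recognize that the three constraints at the center frequency force $\bS_c$ into the canonical form $\bs_c\bs_c^\Herm$ with unit-modulus entries, and then to verify that Hadamard exponentiation by the real scalar $\beta_k$ preserves this structure. The proof is therefore a direct element-wise computation rather than an analytical argument, and the only subtle point is confirming that the Hadamard power $(\cdot)^{\circ\beta_k}$ is unambiguously defined on the entries of $\bS_c$.

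First, I would observe that any matrix satisfying C2 ($\bS_c\succeq 0$) and C3 ($\rank(\bS_c)=1$) admits a spectral decomposition $\bS_c=\bs_c\bs_c^\Herm$ for some vector $\bs_c\in\Cset^N$. Constraint C4 then forces $|[\bs_c]_n|^2=[\bS_c]_{n,n}=1$ for all $n$, so each entry lies on the unit circle and can be written as $[\bs_c]_n=\e^{\jj[\bomega_c]_n}$. Consequently, every entry of $\bS_c$ has modulus one, namely $[\bS_c]_{m,n}=\e^{\jj([\bomega_c]_m-[\bomega_c]_n)}$, and the Hadamard power is therefore well defined on the principal branch.

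Next, using the scaling relation \eqref{eq: omega to omega_c}, I would compute element-wise
\begin{equation*}
[\bS_k]_{m,n}=[\bS_c]_{m,n}^{\beta_k}=\e^{\jj\beta_k([\bomega_c]_m-[\bomega_c]_n)}=\e^{\jj[\bomega(f_k)]_m}\,\e^{-\jj[\bomega(f_k)]_n}.
\end{equation*}
Defining $[\bs_k]_n\defeq\e^{\jj[\bomega(f_k)]_n}$, this identity is exactly $\bS_k=\bs_k\bs_k^\Herm$, which immediately yields $\bS_k\succeq 0$ (C2) and $\rank(\bS_k)=1$ (C3), while $|[\bs_k]_n|^2=1$ gives $\diag(\bS_k)=\bone_N$ (C4). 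Since no step imposes additional assumptions beyond those available at $f_c$, the three constraints transfer intact to every subcarrier $f_k$. The argument contains no substantial obstacle; the only care required is the observation that the unit-modulus property of the entries of $\bS_c$ (itself a consequence of C2--C4 combined) makes the Hadamard exponentiation meaningful and phase-preserving, which would fail for a generic rank-one PSD matrix without unit diagonal.
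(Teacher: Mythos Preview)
Your proof is correct and follows essentially the same route as the paper's own argument: both factor $\bS_c=\bs_c\bs_c^\Herm$ and then observe that the Hadamard power distributes over this outer product, yielding $\bS_k=\bs_k\bs_k^\Herm$ with $\bs_k=\bs_c^{\circ\beta_k}$, from which C2--C4 follow immediately. Your version is slightly more careful in that you explicitly use all three constraints together to ensure unit-modulus entries before taking the Hadamard power, whereas the paper applies the factorization identity $(\bs_c\bs_c^\Herm)^{\circ\beta_k}=\bs_c^{\circ\beta_k}(\bs_c^{\circ\beta_k})^\Herm$ directly without remarking on well-definedness.
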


\begin{proof}
We begin with the rank-one constraint in C3. If $\bS_c$ is rank one, it can be written as $\bS_c = \bs_c \bs_c^\Herm$. Then, we have:
\[
\bS_k = \bS_c^{\circ \beta_k} = (\bs_c \bs_c^\Herm)^{\circ \beta_k} = \bs_c^{\circ \beta_k} (\bs_c^{\circ \beta_k})^\Herm,
\]
which is also rank one.

Next, we verify that $\bS_k$ is positive semidefinite as in C2, assuming that $\bS_c \succeq 0$. Since $\bS_c \succeq 0$, we know that $\bx^\Herm \bS_c \bx \geq 0,\,\forall \bx$. Then:
\[
\bx^\Herm \bS_k \bx = \tr(\bx^\Herm \bS_k \bx) = \tr(\bs_c^{\circ \beta_k} (\bs_c^{\circ \beta_k})^\Herm \bx \bx^\Herm) \geq 0,\quad \forall \bx,
\]
which confirms that $\bS_k$ is also positive semidefinite.

Finally, consider the diagonal constraint in C4. If $\diag(\bS_c) = \bI_N$, then $[\bS_c]_{i,i} = 1,\,\forall i$, and thus $[\bS_c]_{i,i}^{\beta_k} = 1^{\beta_k}=1,\,\forall k$. Hence, $\diag(\bS_k) = \bI_N,\, \forall k$. This completes the proof.
\end{proof}

Using Lemma~\ref{lemma S for all k}, we can rewrite problem P4 as follows:
 \begin{equation}
\label{eq:optimization 5}
\begin{aligned}
    \text {P5:}&~\underset{\gamma,\bS_c}{\max}~\gamma
    \\&~\text {s.t.} ~~\text{C1},\, \widehat{\text{C2}}: \bS_c\succeq 0,\,
    \\&\quad\hphantom {\text {s.t.} } \widehat{\text{C3}}: \rank(\bS_c)=1,\,  \widehat{\text{C4}}: \diag(\bS_c)=\bone_N.
\end{aligned}
\end{equation}
This problem is still non-convex due to the non-convexity in C1 and $\widehat{\text{C3}}$ in $\bS_c$, and being coupled $\gamma$ and $\bS_c$ in C1. We resolve the non-convexity of each one in the following.
\subsubsection{Rank one constraint in  $\widehat{\text{C3}}$} To tackle this issue, we adopt the penalty method exploited in \cite{delbari2024far}. The basic idea is to replace the rank constraint with inequality $\|\bS\|_*-\|\bS\|_2 \leq 0$, which holds only if $\bS$ has rank smaller or equal to one. While the new constraint is still non-convex, one can apply the first-order Taylor approximation to make it convex. Let $\bS^{(i)}$ denotes the value of matrix $\bS$ in the $i$th iteration. According to the first-order Taylor approximation:
\begin{equation}
\label{eq: taylor approximation}
    \|\bS\|_2\geq\|\bS^{(i)}\|_2+\tr\big(\blambda_{\max}(\bS^{(i)})\blambda_{\max}^\Herm(\bS^{(i)})(\bS-\bS^{(i)})\big).
\end{equation}
By adopting the penalty method and applying \eqref{eq: taylor approximation} into the cost function of P5, we have
\begin{subequations}
\label{eq:optimization 6}
\begin{align}
    \text {P6:}\quad&~\underset{\gamma,\bS_c}{\max}~\gamma-\eta^{(i)}\Big(\|\bS_c\|_*-\|\bS_c^{(i)}\|_2-\tr\big(\blambda_{\max}(\bS_c^{(i)}),\nonumber
    \\&\quad\quad\quad\times\blambda_{\max}^\Herm(\bS_c^{(i)})(\bS_c-\bS_c^{(i)})\big)\Big)
    \\&~\text {s.t.} ~~\text{C1}, \widehat{\text{C2}}, \widehat{\text{C4}}.
\end{align}
\end{subequations}
Here, $\eta^{(i)}$ is the penalty factor at iteration $i$, which increases gradually. By selecting a sufficiently large $\eta$, problems P5 and P6 become equivalent. In the following, we will address the non-convexity of C1 in $\bS_c$.

\subsubsection{Hadamard power constraint on $\bS_c$ in C1} To tackle with this issue, we substitute $\bS_c^{\circ\beta_k}$ with its first order Taylor approximation\footnote{Higher-order Taylor approximations are also applicable; however, they increase computational complexity. We choose the first-order approximation to maintain simplicity.} as follows:
\begin{equation}
    \bS_c^{\circ\beta_k}\approx{\left(\bS_c^{(i)}\right)}^{\circ(\beta_k)}+\left(\beta_k{\left(\bS_c^{(i)}\right)}^{\circ(\beta_k-1)}\right)\odot(\bS_c-\bS_c^{(i)}),\,\forall k,
\end{equation}
where $i$ denotes $i$th iteration. By substituting this approximation in C1, we can define $\widehat{\text{C1}}$ as follows:
\begin{align}
&\tr\!\left(\!\!\bA_k(\gamma)\left(\!\!{\left(\bS_c^{(i)}\right)}^{\circ(\beta_k)}\!\!\!\!+\!\!\left(\beta_k{\left(\bS_c^{(i)}\right)}^{\circ(\beta_k-1)}\right)\odot\left(\bS_c-\bS_c^{(i)}\right)\!\!\right)\!\!\right)\nonumber\\
&\geq\gamma-1, \forall (\bp_u,\bp_e)\in\Pset_u\times\Pset_e,\,\forall k.
\end{align}
By substituting $\widehat{\text{C1}}$ instead of C1 in P6, we have:
\begin{subequations}
\label{eq:optimization 7}
\begin{align}
    \text {P7:}\quad&~\underset{\gamma,\bS_c}{\max}~\gamma-\eta^{(i)}\Big(\|\bS_c\|_*-\|\bS_c^{(i)}\|_2-\tr\big(\blambda_{\max}(\bS_c^{(i)}),\nonumber
    \\&\quad\quad\quad\times\blambda_{\max}^\Herm(\bS_c^{(i)})(\bS_c-\bS_c^{(i)})\big)\Big)
    \\&~\text {s.t.} ~~\widehat{\text{C1}}, \widehat{\text{C2}}, \widehat{\text{C4}}.
\end{align}
\end{subequations}
\subsubsection{Coupled $\gamma$ and $\bS_c$ in $\widehat{\text{C1}}$} To address the coupling of $\gamma$ and $\bS_c$ in $\widehat{\text{C1}}$, we employ \gls{AO}, where one variable is fixed while the other is optimized. On one hand, when $\gamma$ is fixed, the optimization problem P7 becomes convex in the
matrix $\bS_c$ because the objective function is concave, and the
constraints define a convex set. Therefore, it can be efficiently
solved using standard convex optimization solvers such as
CVX \cite{cvx}. On the other hand, when $\bS_c$ is fixed, The problem P7 is linear in terms of $\gamma$ and its closed-form solution is given
by:
\begin{equation}
\label{eq: best gamma}
    \gamma=\underset{\forall \bp_u\in\Pset_u,\,\forall\bp_e\in\Pset_e,\,\forall k}{\min}~\frac{\tr(\bA_k^u(\bp_u)\bS_k)+1}{\tr(\bA_k^e(\bp_e)\bS_k)+1}.
\end{equation}

\subsection{Algorithm and Complexity Analysis}
\label{sec: Algorithm and Complexity Analysis}
The proposed algorithm is summarized in Algorithm \ref{alg:cap}. At each iteration, the most computational step is the computation of the nuclear norm in line 4, which has a complexity of $\bigO(N^3)$. The number of different constraints generated by $\widehat{\text{C1}}$ is the bottleneck and proportional to $|\Pset_u||\Pset_e|K$. Thus, the complexity of the Algorithm \ref{alg:cap} in total is $\bigO(I_{\max}|\Pset_u||\Pset_e|N^3K)$.

\begin{algorithm}[t]
\caption{Proposed Algorithm for Problem P7}\label{alg:cap}
\begin{algorithmic}[1]
\STATE \textbf{Initialize:} $\bs_c^{(0)}=\e^{\jj2\pi\times\mathrm{rand}(N)},\bS_c^{(0)}=\bs_c^{(0)}{\bs_c^{(0)}}^\Herm$, $\gamma^{(0)}$.
\FOR{$j=1, \cdots, J_{\tmax}$}
\STATE Calculate $\bA_k(\gamma^{(j-1)}),\forall (\bp_u,\bp_e)\in\Pset_u\times\Pset_e,\,\forall k$.
    \FOR{$i=1, \cdots, I_{\tmax}$}
    \STATE Solve convex P7 for given $\bS_c^{(i-1)}$, and store the intermediate solution $\bS_c^{(i)}$.
    \STATE Update $\eta^{(i)} =5\eta^{(i-1)}$ then set $i = i + 1$.
    \ENDFOR
    \STATE Set $\bS_k=\left(\bS_c^{(I_{\max})}\right)^{\circ(\beta_k)},\,\forall k,$ and $\bS_c^{(0)}=\bS_c^{(I_{\max})}$.
    \STATE Calculate $\gamma^{(j)}$ from \eqref{eq: best gamma}.
    
    \ENDFOR
\end{algorithmic}
\end{algorithm}

\section{Performance Comparison}
\subsection{Simulation Setup}
We adopt the simulation configuration for coverage extension illustrated in Fig. \ref{fig:system model}, where the \gls{RIS} center is the origin of the Cartesian coordinate system, i.e., $[0,0,0]~\text{m}$. A set of legitimate users is uniformly distributed within a fixed region defined as $\Pset_u\in\{(\x,\y,\z):5~\text{m}\leq\x\leq 7~\text{m}, 0~\text{m}\leq\y\leq 2~\text{m}, \z=-5\}$. We also estimated a fixed area for eavesdropper in $\Pset_e\in\{(\x,\y,\z):5~\text{m}\leq\x\leq 6~\text{m}, -2~\text{m}\leq\y\leq -1~\text{m}, \z=-5\}$. The \gls{BS} is equipped with an $16 \times 16 = 256$-element \gls{UPA}, arranged along the $x$–$z$ plane and located at $[10, 10, 5]~\text{m}$. The \gls{RIS} is modeled as a \gls{ULA} with $N_y = 100$ elements aligned along the $y$-axis. The inter-element spacing for both the \gls{BS} and \gls{RIS} arrays is set to half the wavelength, $\lambda_c=5~$mm. The noise power is calculated as $\sigma_n^2=W_kN_0N_{\rm f}$  with $N_0=-174$~dBm/Hz, $N_{\rm f}=6$~dB, and bandwidth of each subcarrier $W_k=4.2~$MHz. We choose $60$~GHz carrier frequency and the total bandwidth $W=8.64~$GHz in accordance with the IEEE 802.11ay standard \cite{ieee80211ay}. The pathloss for the $k$th subcarrier is modeled as $\rho_k(d_0/d)^\sigma$, where $\rho_k=(\frac{c}{4\pi f_k})^2$ at $d_0=1$~m, and $c$ is the speed of light in vacuum. Moreover, we adopt the pathloss exponent $\sigma = (2,2,2)$ and  Ricean $K$-factor $K=(0,10,10)$ for the \gls{BS}-\gls{MU}, \gls{BS}-\gls{RIS}, and \gls{RIS}-\gls{MU} channels, respectively. 
To evaluate performance, we compare the proposed method against three benchmarks:
\begin{itemize}
    \item \textbf{Benchmark 1:} The \gls{RIS} phase shifts are optimized to distribute reflected power across the entire coverage area for all subcarrier frequencies while suppressing power in the eavesdropper region. This approach does not account for the frequency-dependent phase response of individual \gls{RIS} elements.
    \item \textbf{Benchmark 2:} The \gls{RIS} phase shifts are designed to maximize coverage area power and minimize eavesdropper region power, but only at the center frequency $f_c = 60$~GHz.
    \item \textbf{Benchmark 3:} The \gls{RIS} phase shifts are optimized across all subcarrier frequencies, but only based on the exact locations of users and the eavesdropper, without explicitly shaping the beam across a broader area.
\end{itemize}

 The other parameters used in the simulations are as follows: $P_t=43~$dBm, $\beta=2.4$ \cite{neuder2024architecture}, $\eta^{(0)}=0.01$, $I_{\max}=9$, and $J_{\max}=2$ are assumed.

\subsection{Simulation Results}
Fig.~\ref{fig:SR_f} illustrates the secrecy rate (bits/symbol) \gls{wrt} frequency, as defined in \eqref{eq: SNR}, under the assumption of worst-case received power for legitimate users within the region $\Pset_u$ and maximum received power for the eavesdropper within the region $\Pset_e$. As shown in the figure, the minimum secrecy rate achieved by our proposed method across all frequencies consistently exceeds that of the benchmark methods. 

Benchmark 3 distributes the reflected power across all frequencies; however, since its design depends on the exact user and eavesdropper locations, its worst-case performance is inferior to the others. Benchmark 2 concentrates power within the target area but only at the center frequency, which results in a secrecy rate that degrades as the signal frequency deviates from the center. Finally, Benchmark 1 optimizes the phase shifts based on the signal frequency and the considered areas, but it neglects the frequency-dependent phase response of individual \gls{RIS} elements (similar to the other benchmarks).

\begin{figure}
    \centering
    \includegraphics[width=0.5\textwidth]{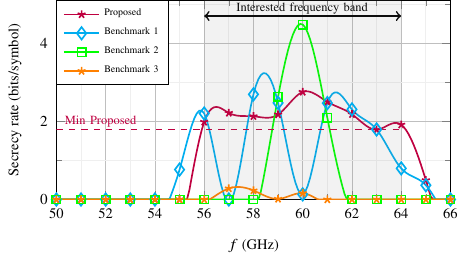}
    \caption{The minimum secrecy rate over all possible locations $\bp_u\in\Pset_u$ and $\bp_e\in\Pset_e$ versus frequency.}
    \label{fig:SR_f}
\end{figure}

To illustrate how the \gls{SNR} is distributed spatially at a specific frequency, we present heat maps of the \gls{SNR} for all four methods in Fig.~\ref{fig: heat map f=58}, corresponding to the carrier frequency $f_k = 57$~GHz. The coverage area and the eavesdropper area are indicated by green and red rectangles, respectively. In Fig.~\ref{fig: benchmark 3}, corresponding to Benchmark 3, the \gls{SNR} within the desired area is non-uniform, as this method focuses only on frequency and not on spatial regions. This non-uniformity leads to degraded performance. Fig.~\ref{fig: benchmark 2} shows the result of Benchmark 2, where noticeable beam splitting occurs due to its optimization being limited to the center frequency. As a result, the \gls{SNR} in the eavesdropper area increases. The outcome for Benchmark 1 is illustrated in Fig.~\ref{fig: benchmark 1}. Since it ignores the frequency-dependent phase response of each \gls{RIS} element, the \gls{SNR} decreases in parts of the desired area and, conversely, increases in the eavesdropper area, thereby reducing its overall performance. In contrast, the proposed method (Fig.~\ref{fig: proposed}) effectively distributes \gls{SNR} uniformly across the desired area while simultaneously minimizing it in the eavesdropper area, demonstrating its superior spatial and frequency-aware design.

\begin{figure}[t]
\centering
\begin{subfigure}{0.24\textwidth}
    \caption{Proposed}
\includegraphics[width=\textwidth,height=0.7\textwidth]{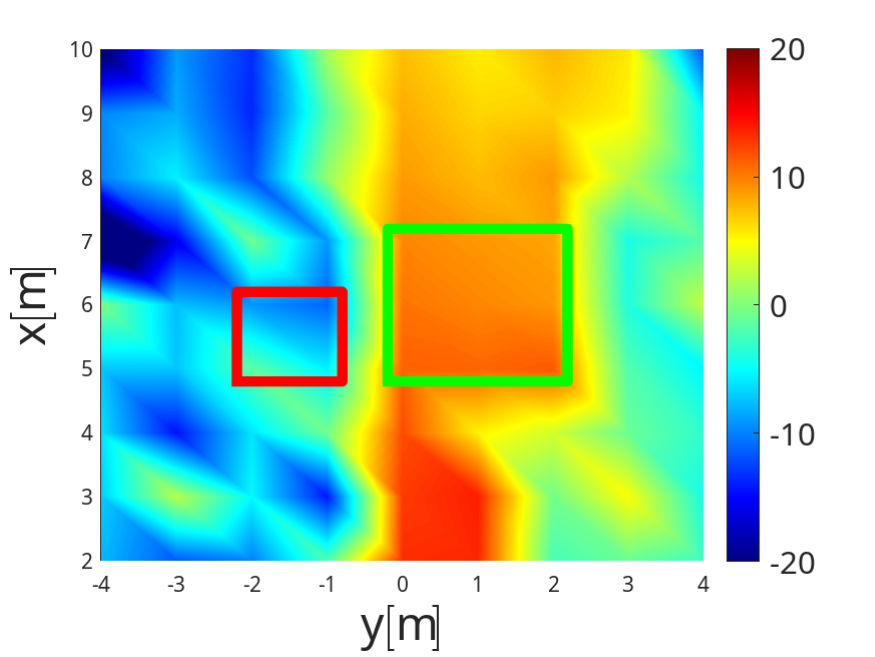}
    \label{fig: proposed}
\end{subfigure}
\begin{subfigure}{0.24\textwidth}
    \caption{Benchmark 1}
    \includegraphics[width=\textwidth,height=0.7\textwidth]{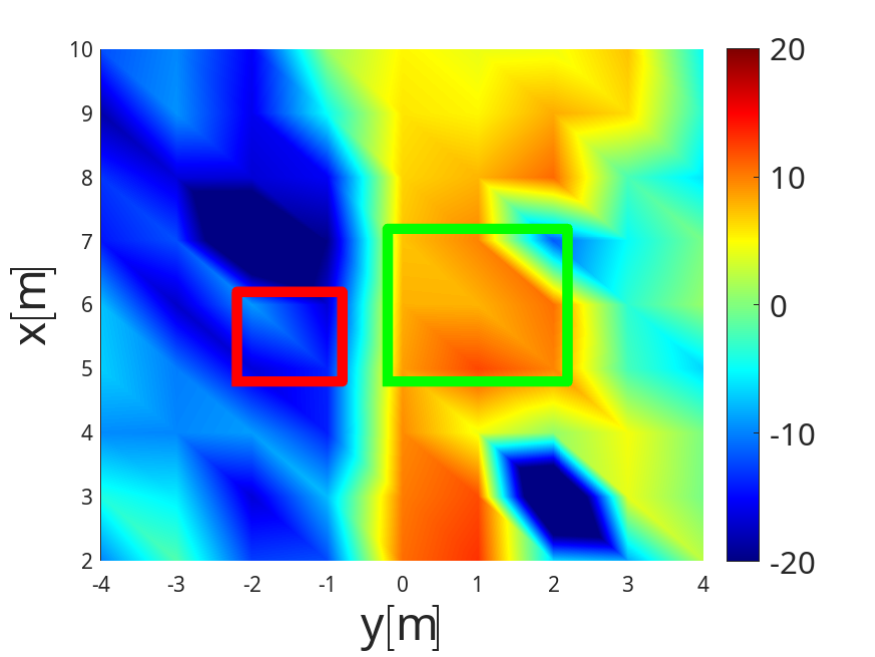}
    \label{fig: benchmark 1}
\end{subfigure}
\begin{subfigure}{0.24\textwidth}
    \caption{Benchmark 2}
   \includegraphics[width=\textwidth,height=0.7\textwidth]{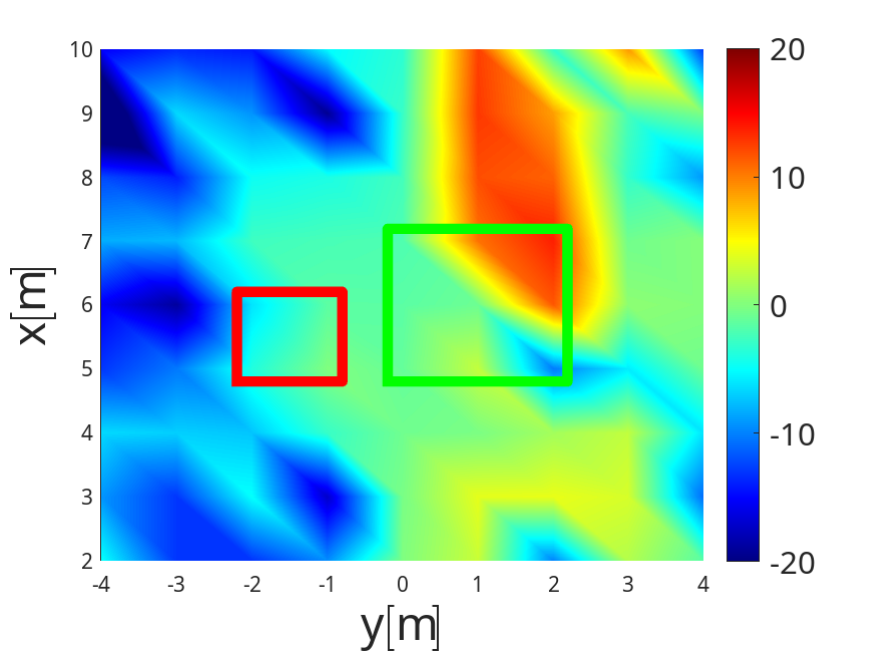}
    \label{fig: benchmark 2}
\end{subfigure}
\begin{subfigure}{0.24\textwidth}
    \caption{Benchmark 3}
    \includegraphics[width=\textwidth,height=0.7\textwidth]{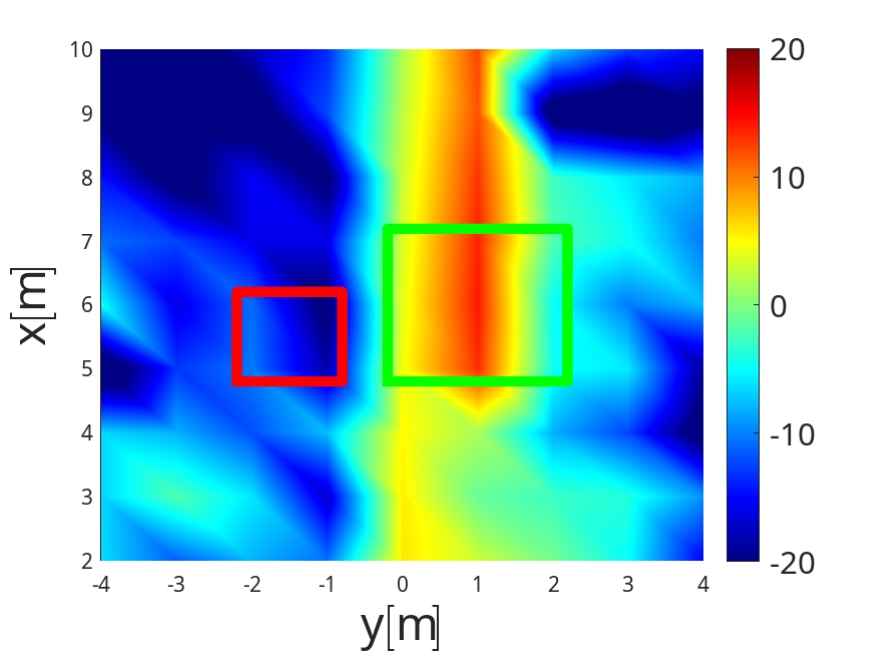}
    \label{fig: benchmark 3}
\end{subfigure}
\caption{\gls{SNR} (dB) using four different methods at $f_k = 57$~GHz. The green area represents the user coverage region, while the red area indicates the possible locations of the eavesdropper.}
\label{fig: heat map f=58}
\end{figure}

\section{Conclusion}
\label{sec: Conclusion}
In this paper, we investigated the impact of frequency variation on the phase shifts of \gls{RIS} elements. To address the resulting frequency dependency, we proposed an algorithm that accounts for this effect in the design of phase shifts, aiming to maximize the secrecy rate within a desired area. Simulation results confirmed the importance of incorporating the frequency-dependent behavior of \gls{RIS} elements, which demonstrates significant performance gains when this factor is properly considered. Specifically, the proposed method achieved a secrecy rate of about 2 bits/symbol over an 8 GHz bandwidth at a center frequency of 60 GHz.

\bibliographystyle{IEEEtran}
\bibliography{References}

\end{document}